\theoremstyle{thmstyleone}%
\newtheorem{theorem}{Theorem}
\newtheorem{proposition}[theorem]{Proposition}%
\theoremstyle{thmstyletwo}%
\newtheorem{example}{Example}%
\newtheorem{remark}{Remark}%
\newtheorem{lemma}{Lemma}
\theoremstyle{thmstylethree}%
\newtheorem{definition}{Definition}%
\begin{document}

\title[Photon surfaces extensions for dynamical gravitational collapse]{Photon surfaces extensions for dynamical gravitational collapse}


\author*[1,2]{\fnm{Roberto} \sur{Giamb\`o}}\email{roberto.giambo@unicam.it}

\author[1]{\fnm{Camilla} \sur{Lucamarini}}\email{camilla.lucamarini@studenti.unicam.it}
\equalcont{These authors contributed equally to this work.}


\affil*[1]{\orgdiv{School of Science and Technology, Mathematics Division}, \orgname{Università di Camerino}, \orgaddress{\street{Via Madonna delle Carceri 8}, \city{Camerino}, \postcode{62032}, \state{MC}, \country{Italy}}}

\affil[2]{\orgdiv{Osservatorio Astronomico di Brera}, \orgname{INAF}, \orgaddress{\street{Via Brera 28}, \city{Milano}, \postcode{20121},  \country{Italy}}}





\abstract{The equations for the photon surface in spherical symmetry are worked out, starting from \cite{Claudel:2000yi}, in the most general dynamical setting. We show that the condition for a timelike hypersurface to be a photon surface can be reformulated as a non-autonomous dynamical system, whose analysis reveals that the same condition also holds when the surface is generated by a null radial geodesic. 

As an application, we consider a well-known model of a spherical dust cloud undergoing gravitational collapse. Comparing our findings with those in \cite{Cao:2019vlu}, we  establish that the photon surface uniquely extends in the interior spacetime as a null hypersurface, allowing us to analytically investigate whether it covers the singularity developing in the Lemaitre-Tolman-Bondi (LTB) model.
}

\keywords{photon sphere, dust collapse, black hole, naked singularity}



\maketitle

\section{Introduction}\label{sec:intro}
Detection of black holes represents an intriguing challenge in the realm of observational astrophysics. The well-known international collaboration Event Horizon Telescope (EHT) has utilized a network of radio telescopes across the globe to elaborate an image of the black hole in galaxy M87 \cite{EventHorizonTelescope:2019dse,EventHorizonTelescope:2024dhe}, often referred to as the black hole shadow. The EHT collaboration has also applied analogous methods to investigate other substantial cosmic entities, notably the SgrA black hole situated at the Milky Way's core \cite{EventHorizonTelescope:2022wkp}.

It is important to note that the EHT images are not direct optical photographs but rather reconstructed radio maps that trace the emission from hot gas orbiting the black hole, which gives rise to the observed luminous ring.


Indeed, despite its name, the EHT does not aim to capture the shape of the event horizon itself, but rather to trace the motion of matter orbiting the black hole within the observable region to a faraway observer, which can be effectively described, in the simplest toy models, by the so-called \textit{photon sphere}.

Classically, the concept of a photon sphere arises in static and spherically symmetric settings such as the Schwarzschild solution, where it manifests as a timelike hypersurface at radius \( r = 3M \) on which initially tangent null geodesics remain tangent as they evolve. 
A similar approach, based on the study of null radial geodesics, has been used in the context of static spacetimes \cite{Nolan:2014maa,Hasse:2001by,Chakraborty:2011uj,Bogush:2023ojz,Bogush:2024fqj,Sadeghi:2024itx,Vertogradov:2024dpa}, or even in the context of non--transparent collapse \cite{Schneider:2018hge}.

When it comes to a fully dynamical setting, the situation is more involved. One could think of generalizing the above approach to this situation, which actually proves fruitful when the metric assumes a special form - for example, \cite{Mishra:2019trb,Solanki:2022glc,Vertogradov:2024eim} study
the situation for a spherical metric written in Bondi coordinates $(v,r,\theta,\phi)$, where $v$ is a null coordinate, as is the case for the Vaidya solution. 

However, this notion has been extended and generalized to a  broader geometric object known as the \textit{photon surface}, formalized in the seminal work by Claudel, Virbhadra, and Ellis \cite{Claudel:2000yi}, where it is defined independently of any specific symmetry or Killing structure.

In their framework, photon surfaces are understood as nowhere-spacelike hypersurfaces invariant under the flow of null geodesics that remain confined within the surface. This perspective allows for a more flexible analysis, applicable not only in static spacetimes \cite{Virbhadra:2002ju} but also in dynamical regimes where symmetry assumptions are weakened or absent. Subsequent studies, such as those by Cao and Song \cite{Cao:2019vlu}, have proposed quasi-local characterizations of photon surfaces in general spherically symmetric spacetimes, aiming to preserve the spirit of Claudel et al.'s geometric definition while adapting it to the complexity of evolving gravitational systems.

One physically rich setting in which photon surfaces can be investigated dynamically is gravitational collapse. The Lemaitre–Tolman–Bondi (LTB) models, describing inhomogeneous spherical dust clouds undergoing collapse under their own gravity, provide a valuable laboratory for probing how such surfaces behave in the presence of singularities and horizons. These models allow for analytical control over the metric and matter profiles and, depending on the initial data, can yield either black hole formation or naked singularities, offering insight into the end-state of light rays and causal curves near regions of extreme curvature \cite{Joshi:2008zz}.
Recent investigations have demonstrated that shadows and photon spheres, traditionally associated with black holes, can also arise in certain naked singularity spacetimes resulting from gravitational collapse, depending on specific parameters of the model \cite{Shaikh:2018lcc}. This challenges the conventional interpretation of shadow observations as definitive evidence for black holes.

The central aim of this work is to explore how photon surfaces, originally defined in static contexts, can be meaningfully extended into dynamical spacetimes—specifically, within the LTB collapse model. By deriving and analyzing the governing equations for photon surfaces in the most general spherically symmetric dynamical setting, we assess whether and how these surfaces can be propagated from the exterior Schwarzschild geometry into the interior region occupied by the collapsing matter. In particular, we focus on whether such an extension can effectively cover the singularity forming at the center, depending on the visibility of that singularity.

To carry out this investigation, we start by reviewing the photon surface condition for timelike hypersurfaces in spherically symmetric spacetimes, as given in \cite{Claudel:2000yi}, and derive the relevant differential equations that such surfaces must satisfy. We then apply this framework to the LTB model, examining how the photon surface extends across the boundary between interior and exterior spacetimes and determining the precise conditions under which such an extension exists and remains physically meaningful. In this respect, the present paper aims to compare and discuss its findings with those presented in \cite{Cao:2019vlu}, trying to highlight some aspects that might otherwise be overlooked, particularly regarding the relation between the photon surface behavior and the appearance of naked singularities. 

Indeed, recent studies have shown that black holes and naked singularities in this context produce almost indistinguishable shadows \cite{Kong:2013daa,Kong:2013aja}. However, such similarity arises only at late times, whereas the underlying dynamics responsible for the shadows differ between the naked singularity and the black hole configurations \cite{Ortiz:2013wza,Ortiz:2015rma}. In our work, we highlight that the extension of the photon surface behaves differently between black holes and naked singularities, and these differences match the distinct shadow formation dynamics observed in previous research.

Before proceeding, it is worth stressing that horizons and photon surfaces, although both tied to the causal structure of spacetime, address different physical questions.
In the context of gravitational collapse, an apparent horizon is the outermost marginally trapped 2-sphere: the surface on which outgoing null geodesics have vanishing expansion, so that light directed outward neither spreads nor converges \cite{PhysRevLett.14.57,Hawking:1973uf}. It marks the boundary of the trapped region, beyond which even outward-pointing photons are dragged inward. A photon surface, by contrast, is a hypersurface with the property that every null geodesic initially tangent to it remains tangent under propagation \cite{Claudel:2000yi}; it captures the locus where photons can stay confined to the surface. In Schwarzschild spacetime the photon sphere envelops the event horizon, located in a stronger gravitational regime. We shall see that the same hierarchy persists in a dynamical setting also, and we will highlight how the two surfaces evolve differently during collapse.

The paper is organized as follows. In Section \ref{sec:ss}, we recall the geometric setting for photon surfaces in spherically symmetric spacetimes and derive their evolution equations. 
Section \ref{sec:dust} provides an overview of the fundamental concepts related to spherical dust collapse and reviews the  key features of the singularity emerging at the end of the collapse. Section \ref{sec:dust-ps} focuses on determining the photon surface for the model at hand and showing that the nature of the singularity—naked or covered—dictates whether the photon surface terminates at the regular center or reaches the singularity itself. Finally, Section \ref{sec:outro} summarizes the main findings and suggests possible directions for further investigation.

\section{Photon surfaces in spherical symmetry}\label{sec:ss}
It is a well-established fact that Schwarzschild spacetime exhibits $r=3m$ as a hypersurface with the property that initially tangent null geodesics remain tangent, known as \textit{photon sphere}. As we shall see, the paper \cite{Claudel:2000yi} suggests that photon spheres arise as particular cases of more general objects, photon surfaces.
\begin{definition}\label{def:Ellis}
A \textit{photon surface} of $(M,g)$ is an immersed, nowhere–spacelike hypersurface $S$ of $(M,g)$ such that $\forall p\in S$ and $\forall k\in T_pS$, $\exists\gamma:(-\epsilon,\epsilon)\to M$ such that $\gamma(0)=p,\,\dot\gamma(0)=k$, and $\gamma((-\epsilon,\epsilon))\subset S$.
\end{definition}

\begin{remark}\label{rem:null}
    A null hypersurface is trivially a photon surface.
\end{remark}

In the spherically symmetric case, \cite{Claudel:2000yi} states a condition for a timelike hypersurface to be a photon surface:

\begin{theorem}\cite[Theorem III.1]{Claudel:2000yi}\label{thm:Ellis}
    Let $(M,g)$ be a spherically symmetric spacetime and $S$ an $SO(3)$--invariant timelike hypersurface of $(M,g)$, $X$ be the $SO(3)$--invariant unit future--directed timelike tangent vector field along $S$ orthogonal to the $SO(3)$--invariant 2--spheres in $S$. Let $\mathcal T$ be one such $SO(3)$--invariant 2--sphere in $S$ and $\mathcal T_s$ be the $SO(3)$--invariant 2--sphere in $S$ at arc length $s$ from $\mathcal T$ along the integral curves of $X$. Then $S$ is a photon surface of $(M,g)$ iff 
    \begin{equation}\label{eq:Ellis}
    \frac{\mathrm d^2}{\mathrm ds^2}A_s=\frac{1}{4A_s}\left(\frac{\mathrm d}{\mathrm ds}A_s\right)^2+
    A_s\left(
    \frac13\Theta^2-G_{\alpha\beta}n^\alpha n^\beta\right)-4\pi,
    \end{equation}
    where $A_s$ is the area of $\mathcal T_s$ and $\Theta$ is the expansion of the normal unit $n^\alpha$ to $S$. 
\end{theorem}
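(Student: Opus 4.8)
The plan is to set up adapted coordinates on the hypersurface $S$ and translate the geometric photon-surface condition of Definition~\ref{def:Ellis} into a statement about the second fundamental form, then specialize to spherical symmetry. The key structural fact, which I would establish first, is the well-known characterization (due to Claudel--Virbhadra--Ellis) that a nowhere-spacelike hypersurface $S$ is a photon surface if and only if it is \emph{totally umbilic}, i.e.\ its second fundamental form $h_{ab}$ is pure trace, $h_{ab}=\tfrac{1}{n-1}(\mathrm{tr}\,h)\,\sigma_{ab}$ where $\sigma_{ab}$ is the induced metric. The forward direction comes from the defining property: if every tangent direction $k$ at every point admits a geodesic of $M$ staying in $S$, then in particular null directions do, and the geodesic equation forces $h_{ab}k^a k^b=0$ on the null cone of $T_pS$; since a quadratic form vanishing on a full null cone must be proportional to the metric, umbilicity follows. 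This is the conceptual heart, and I would expect the CVE paper to supply it as a prior proposition that Theorem~\ref{thm:Ellis} can invoke.

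Next I would exploit spherical symmetry to reduce the tensorial umbilicity condition to a scalar ODE. On the timelike hypersurface $S$ I use the orthonormal-type decomposition furnished by the vector field $X$ (unit timelike, tangent to $S$, orthogonal to the $SO(3)$-orbits) together with the round 2-spheres $\mathcal{T}_s$ parametrized by arc length $s$ along the integral curves of $X$. The induced metric on $S$ then splits as $-\,ds^2$ plus the round metric on $\mathcal{T}_s$ of area $A_s$. Umbilicity $h_{ab}\propto\sigma_{ab}$ means the $XX$-component of the second fundamental form equals the (common) angular component:
\begin{equation*}
h(X,X)=\frac{h_{\theta\theta}}{\sigma_{\theta\theta}}=\cdots.
\end{equation*}
Writing the radius as $R_s$ with $A_s=4\pi R_s^2$, the angular component of $h$ is controlled by $\tfrac{1}{R_s}\tfrac{dR_s}{ds}$ times the outward normal component, while $h(X,X)$ involves the geodesic curvature of the $X$-curves, i.e.\ $\tfrac{d^2R_s}{ds^2}$-type terms. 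Equating the two and clearing $R_s$ produces a second-order ODE for $R_s$, which I then convert to $A_s$ using $A_s=4\pi R_s^2$, $\tfrac{d}{ds}A_s=8\pi R_s\tfrac{dR_s}{ds}$, and the chain rule; the substitution automatically generates the $\tfrac{1}{4A_s}(\tfrac{d}{ds}A_s)^2$ term on the right-hand side of \eqref{eq:Ellis} and the constant $-4\pi$ arises from the intrinsic (Gaussian) curvature of the unit 2-sphere.

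The remaining task is to identify the curvature combination $\tfrac13\Theta^2-G_{\alpha\beta}n^\alpha n^\beta$. Here I would invoke the Gauss--Codazzi relations for the embedding of $S$ in $M$, relating the ambient curvature to the intrinsic curvature of $S$ and the second fundamental form, and then use the spherically symmetric structure to express the trace terms through the expansion $\Theta$ of the unit normal $n^\alpha$ and the Einstein tensor contraction $G_{\alpha\beta}n^\alpha n^\beta$. Concretely, $\Theta$ encodes $\mathrm{tr}\,h$ plus the normal derivative of $\log(\text{area})$, and the Ricci/Einstein normal-normal component enters via the Gauss equation's contraction with $n$. I expect the main obstacle to be precisely this bookkeeping: tracking signs under the Lorentzian signature, correctly distinguishing the tangential field $X$ from the normal field $n$, and matching the factor $\tfrac13$ (which reflects the $3$-dimensionality of $S$ and the umbilic splitting of $\Theta^2$ into tangential and angular pieces). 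Once the umbilicity condition is in hand, the algebra is routine but sign-sensitive, so I would organize it by computing $h_{ab}$ explicitly in the $(s,\theta,\phi)$ frame, verifying umbilicity component-by-component, and only then substituting the Gauss--Codazzi identities to recast the result in the manifestly geometric form \eqref{eq:Ellis}.
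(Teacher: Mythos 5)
This theorem is quoted verbatim from \cite{Claudel:2000yi} and the paper offers no proof of its own, so the only meaningful comparison is with the original Claudel--Virbhadra--Ellis argument. Your proposal reconstructs essentially that same route --- first the characterization of photon surfaces as totally umbilic hypersurfaces (their Theorem II.2, with the correct null-cone argument for the forward direction), then the $SO(3)$-invariant reduction of umbilicity to a scalar second-order equation in the areal radius, recast via Gauss--Codazzi in terms of $\Theta$ and $G_{\alpha\beta}n^\alpha n^\beta$ --- so it matches the source proof in structure and key ideas, modulo the sign bookkeeping (e.g.\ $\sigma(X,X)=-1$ in the umbilicity relation) that you yourself flag as the delicate part.
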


Now, let us consider a spacetime $M$ with spherical symmetry, characterized by coordinate variables $(t, x, \theta, \phi)$. The metric is expressed as follows:
\begin{equation}\label{eq:ss}
g=-e^{2\nu}\,\mathrm{d}t^2+e^{2\lambda}\,\mathrm{d}x^2+r^2\mathrm{d}\Omega^2,
\end{equation}
where $\lambda$, $\nu$, and $r$ are functions of $(t,x)$ and $\mathrm{d}\Omega^2=\mathrm{d}\theta^2+\sin^2\theta\,\mathrm{d}\phi^2$. We shall proceed to apply Theorem \ref{thm:Ellis}. Since $S$ is timelike, we can parameterize it with coordinates $(t,x(t),\theta,\phi)$, with $x(t)$ to be determined from the equation \eqref{eq:Ellis}. In this situation, indeed, the area of $\mathcal T_s$ is given by $A_s=4\pi\mathbf{r(t,x(t))^2}$, and \eqref{eq:Ellis} takes the form
\begin{multline}\label{eq:ODE2}
\ddot x(t)=\frac{\left(r_x-r\nu_x\right) e^{2(\nu- \lambda)}}{r}+\dot x(t) \left(\frac{r_t}{r}-2 \lambda_t+\nu_t\right)
    \\
    +\dot x(t)^2 \left(-\frac{r_x}{r}-\lambda_x+2 \nu_x\right)+\dot x(t)^3 \frac{\left(r \lambda_t-r_t\right) e^{2( \lambda-\nu)}}{r},
\end{multline}
where $\dot x(t)$ and $\ddot x(t)$ stand for derivatives of $x$ with respect to $t$ and, given a function $f(t,x)$, $f_t$ and $f_x$ denote its partial derivatives. 

For the purpose of the present paper, the second-order ODE \eqref{eq:ODE2} is not the best way to study the photon surface and our strategy here is to shift the perspective from a purely geometric PDE to a dynamical systems framework: this crucial step is formalized in the following Proposition. 
\begin{proposition}\label{thm:ps}
Let $S$ be a photon surface for the metric \eqref{eq:ss} parameterized by $(t,x(t),\theta,\phi)$.
Then, the second-order PDE \eqref{eq:ODE2} is strictly equivalent to the following first-order non-autonomous dynamical system:
\begin{subequations}
    \begin{align}
\dot x(t)&=e^{\nu-\lambda} y(t),\label{eq:ODE1a}\\   
\dot y(t)&=\left(1-y(t)^2\right)\frac{ e^{\nu-\lambda} \left(r_x-r\nu_x\right)+y(t) \left(r_t-r\lambda_t\right)}{r}.\label{eq:ODE1b}    
    \end{align}
\end{subequations}
\end{proposition}
We stress that the reformulation provided by the above Proposition  is not merely a formal manipulation. By recasting \eqref{eq:ODE2} into the first-order system \eqref{eq:ODE1a}--\eqref{eq:ODE1b}, we reveal the underlying non-autonomous dynamical nature of the problem. What's more, this structure is decisive to prove the unique extension of the photon surface into the collapsing dust cloud, as we will see in Section \ref{sec:dust-ps}.

\begin{remark}\label{rem:system}
Inspection of \eqref{eq:ODE1a}--\eqref{eq:ODE1b} above leads us to conclude that if $y(t_0)=\pm 1$ for some $t_0$, then $y(t)=\pm 1$ throughout the solution. Let us notice that $\dot x=\pm e^{\nu-\lambda}$ is the equation of a radial null geodesic in this spacetime. Therefore, although we already know (see Remark \ref{rem:null}) that null hypersurfaces are trivially photon surfaces, we can see that equation \eqref{eq:Ellis} in Theorem \ref{thm:Ellis}, which in principle assumes the surface to be timelike, also holds for those null hypersurfaces $S$ generated by a radial null geodesic $(t,x(t)) $ in the 2--dimensional quotient submanifold $M/S^2$. 
We should remember this fact, which will prove helpful when discussing the initial conditions for equation \eqref{eq:ODE2} or, equivalently, the system \eqref{eq:ODE1a}--\eqref{eq:ODE1b}. As we will observe in the examples we explore later, these initial conditions are essentially determined by the boundary conditions applied to the metric being analyzed.
\end{remark}

\begin{example}\label{rem:static}
Let us consider the particular case where \eqref{eq:ss} is static, therefore admitting a $SO(3)\times\mathbb R$ group of isometries, where the $\mathbb R$ isometries are generated by a Killing vector field $K$, orthogonal to the $SO(3)$ orbits. In this context, the notion of \textit{photon sphere} is defined in \cite{Claudel:2000yi} as an $SO(3)\times\mathbb R$ invariant photon surface. In the static case, $\nu,\lambda$ and $r$ are functions of $x$ only, and the Killing vector field that generates the isometries of $\mathbb R$ is collinear to $\partial_t$. Therefore, a photon space is 
a photon surface with constant $x(t)$, which in view of Proposition \ref{thm:ps} is implicitly defined by those $x$ satisfying
\begin{equation}\label{eq:phsph}
\nu'(x)=\frac{r'(x)}{r(x)}.
\end{equation}
For instance, in the case of the anisotropic generalizations of de Sitter spacetimes \cite{Giambo:2002wr} given by \footnote{Please note that in this example we are using $r$ as a coordinate in place of $x$, in compliance with most of the literature on the subject.} \begin{equation}\label{eq:anisodS}
g=-\left(1-\frac{2m(r)}{r}\right)\,\mathrm dt^2+\left(1-\frac{2m(r)}{r}\right)^{-1}\,\mathrm dr^2+r^2\mathrm d\Omega^2,
\end{equation}
equation \eqref{eq:phsph} takes the well-known form (see \cite[(67)]{Claudel:2000yi})
\[
1-\frac{3m(r)}{r}=m'(r),
\]
which in the special Schwarzschild case $m$ constant is solved exactly by $r=3m$. 
\end{example}

\section{Spherical dust collapsing spacetimes}\label{sec:dust}

\subsection{The global model}
\label{sec:collapse}
In the following, we will first review the basic properties of spherical dust spacetimes together with the features related to the nature of the spacetime singularities arising from complete collapse. 

Let us consider a general spherical metric \eqref{eq:ss} where the energy--momentum tensor is given by $T=-\rho\,\mathrm dt\otimes\partial_t$ \cite{Singh:1994tb,Joshi:2008zz}. Using Einstein field equations, the metric reduces to the 
Lema\^{\i}tre--Tolman--Bondi (LTB) model
\begin{equation}
    \label{eq:LTB}
    g=-\mathrm dt^2+\frac{r_x^2}{1+k(x)}\,\mathrm dx^2+r^2\,\mathrm d\Omega^2,
\end{equation}
where the collapse evolution is dictated by the PDE 
\begin{equation}\label{eq:LTB-pde}
r_t(t,x)=-\sqrt{k(x)+\frac{2m(x)}{r(t,x)}}.
\end{equation}
The initial data $k(x)$ and $m(x)$ can be prescribed at some initial time $t_0$, i.e., $t_0=0$, along with the initial energy $\rho_0(x):=\rho(t_0,x)$. From \eqref{eq:LTB-pde} the identity $m(x)=\frac{r}{2}\left(1-g^{\alpha\beta}r_\alpha r_\beta\right)$ is immediately derived: the function $m(x)$ is the so--called {\it Misner-Sharp mass}, and its choice will also determine the initial condition $r(t_0,x)$, because integrating the Einstein field equation $m_x=4\pi\rho r^2 r_x$ along $t=t_0$ one obtains
$$
r(t_0,x)^3=\int_0^x\frac{3m'(\xi)}{4\pi\rho_0(\xi)}\,\mathrm d\xi.
$$
The metric \eqref{eq:LTB} describes only the interior region of the model, i.e., the dust star. In other words, this metric is defined for $x\in[0,x_b]$, where $x_b$ represents the boundary of the star. The timelike hypersurface $\Sigma=\{x=x_b\}$ must be matched with a suitable exterior solution that describes the matter distribution outside the spherical dust cloud. 
This will be achieved using the Darmois junction conditions, stating that the matching of two spacetimes across a timelike hypersurface $\Sigma$ is smooth if the first and the second fundamental forms of the two metrics in $\Sigma$ respectively coincide 
\cite{Darmois1927,Israel:1966rt,Mars:1993mj}. 
It is well known that for the exterior region to be Schwarzschild spacetime (i.e., spherical vacuum), the radial pressure, indicated by the $T^1_1$ component of the energy-momentum tensor, must be zero at $\Sigma$. Naturally, this applies to the model being analyzed here, where pressures uniformly vanish throughout the spacetime. According to the Einstein field equations, this condition ensures that the Misner--Sharp mass remains constant across the matching hypersurface, a necessary condition for matching \cite{Magli_1997}.

It is worth observing that the construction of the global model—an LTB interior matched to a Schwarzschild exterior across the timelike hypersurface $\Sigma$—has the structure of an initial-boundary value problem (see e.g. \cite{Brito:2017ijp}): the Darmois conditions play the role of boundary conditions that must be satisfied throughout the evolution, not only on the initial slice. In the present case, these conditions reduce to the requirement that the mass is continuous along $\Sigma$, together with the identification of the areal radius on $\Sigma$, and they are automatically propagated by the explicit solution of \eqref{eq:LTB-pde}.

\subsection{Final fate of the marginally bound dust collapse}\label{sec:endstate}
From now on, we will confine ourselves to the so-called \textit{marginally bound case}, which amounts to setting $k(x)=0$. On the one hand, this choice does not exclude any of the physical implications of the model because, as one can demonstrate \cite{Giambo:2002tp}, the spacetime causal structure is reduced to the study of a dynamical system depending on the choice of some parameters typically related to the Taylor expansion of the mass function. This is an instance where a symmetry‑reduced Einstein system exhibits qualitatively distinct regimes separated in parameter space; see \cite{Cotsakis:2023emv,Cotsakis2025}.
On the other hand, however, the marginally bound assumption allows the equation \eqref{eq:LTB-pde} to integrate explicitly and yield 
\begin{equation}\label{eq:r-mb}
r(t,x)=r(1-\mu(x) t)^{2/3},\qquad\mu(x)=\frac32\sqrt{\frac{2m(x)}{x^3}},
\end{equation}
where the initial energy-density profile has also been chosen so that $r(0,x)=x$. In this way, each timelike hypersurface with fixed $x=x_0$ physically represents the shell of dust cloud particles that, at the initial time, are located at a proper distance $x_0$ from the center of symmetry, and the function $t\mapsto r(t,x_0)$ will represent the dynamical evolution of this shell as time flows by. The shell collapses under its own gravitational influence and eventually becomes singular at comoving time $t_s(x_0)=\mu(x_0)^{-1}$, where the energy density $\rho$ diverges, indicating the presence of a spacetime singularity, which can be covered by the presence of the apparent horizon $t_h(x_0)$,  described by the equation $r(t_h(x),x)=2m(x)$, see Figure \ref{fig:1}.
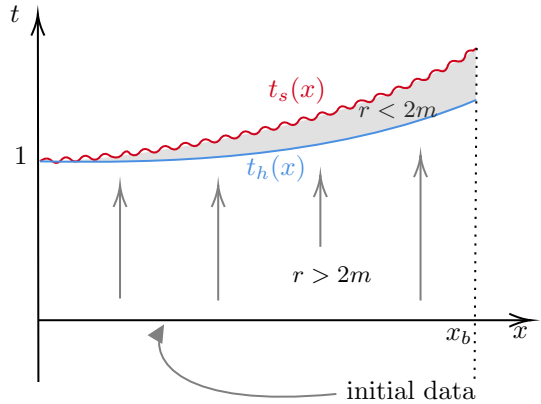
\begin{figure}[h]
  \centering

\tikzset{every picture/.style={line width=0.75pt}} 

\begin{tikzpicture}[x=0.75pt,y=0.75pt,yscale=-1,xscale=1]

\draw  [draw opacity=0][fill={rgb, 255:red, 155; green, 155; blue, 155 }  ,fill opacity=0.3 ] (172.29,114.36) .. controls (195.87,109.07) and (215.24,103) .. (230.67,97.16) .. controls (241.04,93.24) and (249.63,89.42) .. (256.54,86.02) .. controls (271.85,78.49) and (278.83,73.01) .. (278.41,73) .. controls (277.53,72.97) and (278.16,99.2) .. (278.22,99.11) .. controls (278.27,99.06) and (270.54,103.77) .. (248.96,110.15) .. controls (237.63,113.49) and (222.5,117.3) .. (202.67,121.11) .. controls (145.04,132.18) and (60.54,129.64) .. (60.41,130) .. controls (60.29,130.36) and (97.67,131.11) .. (172.29,114.36) -- cycle ;
\draw    (59.62,209.92) -- (303.62,209.92) ;
\draw [shift={(305.62,209.92)}, rotate = 180] [color={rgb, 255:red, 0; green, 0; blue, 0 }  ][line width=0.75]    (10.93,-3.29) .. controls (6.95,-1.4) and (3.31,-0.3) .. (0,0) .. controls (3.31,0.3) and (6.95,1.4) .. (10.93,3.29)   ;
\draw    (59.64,241) -- (59.37,57.8) ;
\draw [shift={(59.36,55.8)}, rotate = 89.91] [color={rgb, 255:red, 0; green, 0; blue, 0 }  ][line width=0.75]    (10.93,-3.29) .. controls (6.95,-1.4) and (3.31,-0.3) .. (0,0) .. controls (3.31,0.3) and (6.95,1.4) .. (10.93,3.29)   ;
\draw [color={rgb, 255:red, 208; green, 2; blue, 27 }  ,draw opacity=1 ]   (60.41,130) .. controls (62.66,128.22) and (64.61,128.12) .. (66.26,129.69) .. controls (67.94,131.26) and (69.52,131.16) .. (71,129.39) .. controls (72.47,127.62) and (74.06,127.5) .. (75.78,129.04) .. controls (77.52,130.57) and (79.13,130.44) .. (80.61,128.65) .. controls (82.08,126.84) and (83.7,126.7) .. (85.47,128.21) .. controls (87.26,129.71) and (88.9,129.55) .. (90.37,127.73) .. controls (91.83,125.9) and (93.47,125.73) .. (95.3,127.2) .. controls (97.97,128.58) and (100.03,128.34) .. (101.49,126.49) .. controls (102.94,124.63) and (104.19,124.48) .. (105.23,126.03) .. controls (107.11,127.47) and (108.78,127.25) .. (110.23,125.38) .. controls (111.67,123.5) and (113.34,123.27) .. (115.24,124.69) .. controls (117.15,126.1) and (118.83,125.85) .. (120.26,123.95) .. controls (121.68,122.05) and (123.35,121.79) .. (125.28,123.18) .. controls (127.23,124.56) and (128.9,124.29) .. (130.31,122.37) .. controls (131.71,120.45) and (133.39,120.17) .. (135.34,121.53) .. controls (137.31,122.88) and (138.98,122.58) .. (140.36,120.64) .. controls (141.73,118.69) and (143.4,118.39) .. (145.37,119.72) .. controls (147.36,121.04) and (149.02,120.72) .. (150.37,118.76) .. controls (151.71,116.79) and (153.37,116.46) .. (155.35,117.76) .. controls (157.34,119.05) and (159,118.71) .. (160.31,116.73) .. controls (161.6,114.75) and (163.25,114.4) .. (165.25,115.67) .. controls (167.25,116.93) and (168.88,116.56) .. (170.15,114.57) .. controls (171.4,112.57) and (173.02,112.19) .. (175.02,113.43) .. controls (177.02,114.67) and (178.63,114.28) .. (179.85,112.27) .. controls (181.04,110.26) and (182.64,109.86) .. (184.64,111.07) .. controls (186.64,112.28) and (188.22,111.87) .. (189.38,109.84) .. controls (190.52,107.81) and (192.09,107.39) .. (194.08,108.58) .. controls (196.07,109.76) and (197.62,109.33) .. (198.72,107.3) .. controls (199.79,105.26) and (201.31,104.82) .. (203.3,105.98) .. controls (205.29,107.13) and (207.17,106.56) .. (208.94,104.29) .. controls (209.93,102.24) and (211.4,101.77) .. (213.37,102.9) .. controls (215.35,104.02) and (216.81,103.55) .. (217.74,101.49) .. controls (218.65,99.44) and (220.43,98.84) .. (223.09,99.69) .. controls (225.05,100.78) and (226.45,100.29) .. (227.28,98.22) .. controls (228.09,96.16) and (229.79,95.54) .. (232.4,96.35) .. controls (234.34,97.4) and (235.67,96.89) .. (236.4,94.82) .. controls (237.73,92.5) and (239.35,91.85) .. (241.26,92.88) .. controls (243.81,93.63) and (245.38,92.97) .. (245.97,90.9) .. controls (247.13,88.57) and (248.65,87.9) .. (250.52,88.88) .. controls (253,89.57) and (254.46,88.89) .. (254.9,86.83) .. controls (255.85,84.51) and (257.53,83.68) .. (259.93,84.33) .. controls (261.78,85.24) and (263.11,84.54) .. (263.91,82.21) .. controls (264.61,79.91) and (266.12,79.05) .. (268.43,79.63) .. controls (270.74,80.18) and (272.15,79.3) .. (272.65,77.01) .. controls (273.05,74.74) and (274.36,73.86) .. (276.57,74.35) -- (278.41,73) ;
\draw [color={rgb, 255:red, 74; green, 144; blue, 226 }  ,draw opacity=1 ]   (60.41,130) .. controls (117.41,131) and (204.22,130.11) .. (278.22,99.11) ;
\draw  [dash pattern={on 0.84pt off 2.51pt}]  (278.41,73) -- (277.33,242.67) ;
\draw [color={rgb, 255:red, 128; green, 128; blue, 128 }  ,draw opacity=1 ]   (121.01,214.83) .. controls (115.09,230.23) and (130.37,254.52) .. (208.39,247) ;
\draw [shift={(122.39,212)}, rotate = 120.96] [fill={rgb, 255:red, 128; green, 128; blue, 128 }  ,fill opacity=1 ][line width=0.08]  [draw opacity=0] (8.93,-4.29) -- (0,0) -- (8.93,4.29) -- cycle    ;
\draw [color={rgb, 255:red, 128; green, 128; blue, 128 }  ,draw opacity=1 ]   (100.39,199) -- (100.39,144) ;
\draw [shift={(100.39,142)}, rotate = 90] [color={rgb, 255:red, 128; green, 128; blue, 128 }  ,draw opacity=1 ][line width=0.75]    (10.93,-3.29) .. controls (6.95,-1.4) and (3.31,-0.3) .. (0,0) .. controls (3.31,0.3) and (6.95,1.4) .. (10.93,3.29)   ;
\draw [color={rgb, 255:red, 128; green, 128; blue, 128 }  ,draw opacity=1 ]   (149.39,200) -- (149.39,144) ;
\draw [shift={(149.39,142)}, rotate = 90] [color={rgb, 255:red, 128; green, 128; blue, 128 }  ,draw opacity=1 ][line width=0.75]    (10.93,-3.29) .. controls (6.95,-1.4) and (3.31,-0.3) .. (0,0) .. controls (3.31,0.3) and (6.95,1.4) .. (10.93,3.29)   ;
\draw [color={rgb, 255:red, 128; green, 128; blue, 128 }  ,draw opacity=1 ]   (200.39,173) -- (200.39,139) ;
\draw [shift={(200.39,137)}, rotate = 90] [color={rgb, 255:red, 128; green, 128; blue, 128 }  ,draw opacity=1 ][line width=0.75]    (10.93,-3.29) .. controls (6.95,-1.4) and (3.31,-0.3) .. (0,0) .. controls (3.31,0.3) and (6.95,1.4) .. (10.93,3.29)   ;
\draw [color={rgb, 255:red, 128; green, 128; blue, 128 }  ,draw opacity=1 ]   (250.39,200) -- (250.39,129) ;
\draw [shift={(250.39,127)}, rotate = 90] [color={rgb, 255:red, 128; green, 128; blue, 128 }  ,draw opacity=1 ][line width=0.75]    (10.93,-3.29) .. controls (6.95,-1.4) and (3.31,-0.3) .. (0,0) .. controls (3.31,0.3) and (6.95,1.4) .. (10.93,3.29)   ;

\draw (295,211.4) node [anchor=north west][inner sep=0.75pt]    {$x$};
\draw (44,50.4) node [anchor=north west][inner sep=0.75pt]    {$t$};
\draw (188.81,102.71) node [anchor=south] [inner sep=0.75pt]  [color={rgb, 255:red, 208; green, 2; blue, 27 }  ,opacity=1 ]  {$t_{s}( x)$};
\draw (162,125.4) node [anchor=north west][inner sep=0.75pt]  [color={rgb, 255:red, 74; green, 144; blue, 226 }  ,opacity=1 ]  {$t_{h}( x)$};
\draw (277.53,212.4) node [anchor=north east] [inner sep=0.75pt]    {$x_{b}$};
\draw (46,121.4) node [anchor=north west][inner sep=0.75pt]    {$1$};
\draw (212,239) node [anchor=north west][inner sep=0.75pt]   [align=left] {initial data};
\draw (218,98.4) node [anchor=north west][inner sep=0.75pt]  [font=\scriptsize,color={rgb, 255:red, 0; green, 0; blue, 0 }  ,opacity=0.82 ]  {$r< 2m$};
\draw (185,181.4) node [anchor=north west][inner sep=0.75pt]  [font=\scriptsize]  {$r >2m$};

\end{tikzpicture}

  \caption{Schematic picture of the collapse of spherical dust. The process begins from a regular configuration at $t=0$. Every shell $x \in [0, x_b]$ within the spherical cloud undergoes collapse, becoming trapped at comoving time $t_h(x)$ and fully collapsing at $t_s(x)$. The symmetry center gets trapped simultaneously as it becomes singular, allowing for the chance that photons might escape from the central singularity and reach the region $r>2m$.}
  \label{fig:1}
\end{figure}

To sum up, the global model that we are going to consider will therefore be composed of the interior LTB metric \eqref{eq:LTB} with $k(x)=0$ and $r(t,x)$ given by \eqref{eq:r-mb}, defined in the set 
$$ \{(t,x)\,:\,x\in[0,x_b],\,\,t<t_s(x)\}, $$
matched to a Schwarzschild exterior 
\begin{equation}\label{eq:ext}
g_E=-\left(1-\frac{2M}{r}\right)\,d\tau^2+\frac{1}{1-\frac{2M}{r}}\,\mathrm dr^2+r^2\,\mathrm d\Omega^2
\end{equation}
at the star boundary $\Sigma=\{ x=x_b\}$. 
As discussed in the previous subsection,
the Darmois junction conditions ensure that the matching is $C^1$ across $\Sigma$, provided that the Schwarzschild mass  is given by $M=m(x_b)$. Observe that the coordinates $(\tau, r)$ in \eqref{eq:ext} will be related to $t$ on the matching surface -- in particular, $r=r(t,x_b)$ on $\Sigma$\footnote{That justifies the slight abuse of notation where we employ the same letter $r$ for both the radial coordinate in \eqref{eq:ext} and the metric coefficient \eqref{eq:r-mb} within \eqref{eq:LTB}.}. Details can be found in \cite{Giambo2005GravitationalFields}.

We will refer to this model as \textit{the marginally bound spherical dust collapse}.
Its end state, starting from regular initial conditions at $t=0$, is well known in the literature \cite{Singh:1994tb}: let us summarize it briefly. Consider a mass profile such that, using \eqref{eq:r-mb}
\begin{equation}\label{eq:mu}
\mu(x)=1-a x^n + o(x^n),
\end{equation}
with $a>0$ and $n\in\mathbb{N}\setminus\{0\}$, so the initial energy $\rho_0(x)$ is regular up to the central shell $x=0$, and decreasing in a right neighborhood of $x=0$ -- we can suppose that the star boundary $x_b$ belongs to this neighborhood. In this configuration, the non-central singularities, i.e., $t_s(x)$ for every $x\in]0,x_b]$ are always covered by the apparent horizon. Conversely, the central singularity $t_s(0)$ may be either covered or naked, depending on the choice of parameters $a$ and $n$ in \eqref{eq:mu}. Indeed, one has to observe that deriving $t_h(x)$ from the condition $r=2m$ leads to the expression
$$
t_h(x)=\frac{1}{\mu(x)}\left(1-\frac{8}{27}\mu(x)^3 x^3\right),
$$
implying $t_h(0)=t_s(0)=1$, see again Figure \ref{fig:1}. 

Full details of these studies are based on the qualitative study of radial null geodesics and can be found in \cite{Mena:2001dr,Giambo:2002tp}. According to these findings, the central singularity is:
\begin{itemize}
    \item naked if $n=1,2$ or $n=3$ and $a>\frac{2(26+15\sqrt 3)}{27}$, and
    \item covered, otherwise. 
\end{itemize}

\section{Photon surfaces in dust collapse}\label{sec:dust-ps}

\subsection{Extension to the interior LTB metric}\label{sec:ps-ext}
Let us now determine the photon surface for the marginally bound dust collapse. On the exterior \eqref{eq:ext}, of course, the photon surface is given by $r=3M$, see Example \ref{rem:static}. 

For the LTB interior metric,
as pointed out in \cite[(5.5)]{Cao:2019vlu}, a lengthy but straightforward computation transforms the equation \eqref{eq:ODE2} for photon surfaces as
\begin{multline}
    \label{eq:ps-dust}
    \ddot x(t)=
    \frac{k(x)+1}{r r_x}+
     \left(\frac{r_t}{r}-\frac{2 r_{tx}}{r_x}\right)\dot x(t)\\
-    \left(\frac{r_x}{r}+\frac{r_{xx}}{r_x}-\frac{k'(x)}{2 (k(x)+1)}\right)\dot x(t)^2 +\frac{r_x \left(r\, r_{tx}-r_x r_t\right)}{r(k(x)+1)}\dot x(t)^3, 
\end{multline}
that applying Proposition \ref{thm:ps} can be given in the form of the following 1st-order system:
\begin{subequations}
    \begin{align}
    \dot x(t)&=\frac{\sqrt{k(x)+1}}{r_x}\,y(t),\label{eq:ODE1a-mb}\\
    \dot y(t)&=(1-y(t)^2)\frac{r_x\sqrt{k(x)+1} +y(t) \left(r_x r_t-r\, r_{tx}\right)}{r\,r_x}.\label{eq:ODE1b-mb}
\end{align}
\end{subequations}
The crucial step now is to set initial conditions for the system \eqref{eq:ODE1a-mb}--\eqref{eq:ODE1b-mb}. The first condition is necessarily given by $x(t_0)=x_b$, where $t_0$ satisfies the equation
$r(t_0,x_b)=3m(x_b)$, i.e., using \eqref{eq:r-mb},
\begin{equation}\label{eq:t0}
t_0=\frac{1}{\mu(x_b)}\left(1-(2/3)^{3/2}\mu(x_b)^3x_b^3\right)=
\frac{\sqrt{2}}{3}  \left(\sqrt{\frac{x_b^3}{m(x_b)}}-3 \sqrt{3} m(x_b)\right)
\end{equation}
Consequently, the second condition must be expressed as $y(t_0) = y_0$. In the following, we are going to show that $y_0=1$ is the only physically acceptable option. Let's see why.

First of all, in order for the extension not to be spacelike at $t_0$, we need $|y_0|\le 1$, as we can see using \eqref{eq:ODE1a-mb} and \eqref{eq:LTB}. 

In view of Remark \ref{rem:null}, we can accept $y_0=\pm 1$ as initial conditions, since they will correspond to the outgoing ($y_0=1$) and ingoing ($y_0=-1$) radial null geodesics starting at $x=x_b$. Among the above cases, the outgoing curve will necessarily cover the ingoing one, which makes the $y_0=1$ extension more significant than the $y_0=-1$ one.

Of course, one may consider other extensions that start spacelike at $x=x_b$. These choices correspond to setting $|y_0|<1$. However, $y_0< 0$ would correspond to a decreasing $x(t)$ curve and would, for the same reason, be less significant than the $y_0\ge 0$ extension. The following lemma will help us rule out all the other cases. 

\begin{lemma}\label{lem:ext}
    Let $(x(t),y(t))$ solution of \eqref{eq:ODE1a-mb}--\eqref{eq:ODE1b-mb} such that $(x(t_0),y(t_0))=(x_b,y_0)$ with $y_0\in]0,1[.$ Then $\exists\tilde t<t_0\,:\,x(\tilde t)=x_b.$ 
\end{lemma}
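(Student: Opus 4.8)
The plan is to analyse the \emph{backward} flow of \eqref{eq:ODE1a-mb}--\eqref{eq:ODE1b-mb} inside the strip $|y|<1$ — which is invariant by Remark \ref{rem:system} — and to show that the trajectory issuing from $(x_b,y_0)$ with $y_0\in(0,1)$ is forced to turn around in $x$ and climb back to the boundary. First I would record the sign data that drive everything. Since the collapse is regular and shell--crossing free we have $r_x>0$ on the region swept out, so by \eqref{eq:ODE1a-mb} the sign of $\dot x$ coincides with that of $y$; in particular $\dot x(t_0)>0$, so as $t$ decreases below $t_0$ the curve enters $\{x<x_b\}$. A direct computation from \eqref{eq:r-mb} gives $Q:=r_xr_t-r\,r_{tx}=\tfrac23 x^2\mu'(x)(1-\mu(x)t)^{-2/3}$, and the physical requirement that the initial density profile be decreasing forces $\mu'(x)<0$ (consistent with \eqref{eq:mu}), hence $Q<0$ throughout. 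Two consequences of \eqref{eq:ODE1b-mb} are then immediate: on $\{y=0\}$ one has $\dot y=1/r>0$, so the level set $y=0$ is crossed upward; and for $-1<y<0$ one has $yQ>0$, whence $r_x+yQ>r_x>0$ and $\dot y>0$ unconditionally.

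Granting for the moment that the backward trajectory reaches $y=0$ at some finite $\tilde t_1<t_0$ with $x_1:=x(\tilde t_1)\in(0,x_b)$, the rest is clean. For $t<\tilde t_1$ the crossing is transversal (because $\dot y=1/r>0$ at $y=0$), so $y<0$; and by the unconditional positivity of $\dot y$ on the lower strip, $y$ decreases monotonically as $t$ decreases and stays bounded away from $0$ for $t\le\tilde t_1-\delta$. Consequently $\dot x=y/r_x$ is bounded above by a negative constant there, so $x$ increases at a rate bounded below as $t$ decreases and must attain the value $x_b$ at some $\tilde t<\tilde t_1<t_0$, which is exactly the assertion. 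Throughout this last stage the curve remains in the compact region $x\in[x_1,x_b]$ with $r>0$, where the metric is regular, so no existence issue arises.

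The crux — and the step I expect to be the main obstacle — is precisely showing that $y$ does reach $0$ in finite backward time. This is delicate because the sign of $\dot y$ in the upper strip $0<y<1$ is \emph{not} fixed: it is the sign of $r_x+yQ$, governed by the nullcline $y_*:=r_x/|Q|$, and for $y>y_*$ one has $\dot y<0$, so a priori the trajectory could, going backward, drift towards the outgoing null line $y=1$ rather than towards $y=0$. The resolution I would pursue is twofold. First, from \eqref{eq:r-mb} one checks that $y_*\to+\infty$ as $x\to0^+$ (since $Q$ vanishes like $x^{n+1}$ while $r_x$ stays bounded away from $0$); as $x$ is strictly decreasing so long as $y>0$, the trajectory is eventually driven below the nullcline, where $\dot y>0$ and $y$ decreases. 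Second, to upgrade this to \emph{attainment} of $y=0$ at finite time, note that when $y$ is small $\dot y\approx 1/r$, and estimate $\int r^{-1}\,\mathrm dt$ along the curve: both in the near--center regime and in the $t\to-\infty$ pre--collapse regime this integral diverges, so $y$ cannot merely asymptote to $0$ but must cross it at finite $\tilde t_1$, necessarily at $x_1>0$. Equivalently, writing $w=1-y^2$ one has $\frac{\mathrm dw}{\mathrm dx}=-\frac{2w(r_x+yQ)}{r}$, which near the center reduces to $\frac{\mathrm dw}{\mathrm dx}\approx-\frac{2w}{x}$, forcing $w\uparrow1$ — that is $y\downarrow0$ — at a strictly positive radius and thereby excluding that the trajectory ever reach $x=0$ while $y>0$.
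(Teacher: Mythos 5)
Your overall strategy coincides with the paper's own proof: first force $y$ to vanish at a finite earlier time (your $w=1-y^2$ computation near the center is exactly the paper's estimate \eqref{eq:ext2}, and your divergence of $\int r^{-1}\,\mathrm dt$ in the $t\to-\infty$ regime is the paper's estimate \eqref{eq:ext1}); then use the transversality $\dot y=1/r>0$ at $y=0$ to keep $y<0$ at all earlier times (the paper phrases this same fact as the strict-local-minimum property read off from \eqref{eq:ps-dust}); finally argue that the trajectory, now moving backward toward larger $x$, must return to $x_b$. Your explicit computation $Q=\tfrac23 x^2\mu'(x)(1-\mu t)^{-2/3}<0$ and the resulting backward monotonicity of $y$ on the strip $-1<y<0$ is a correct and tidy way to organize the middle step; the nullcline discussion in your ``crux'' paragraph is dispensable, since the two integral estimates carry all the weight, just as in the paper.

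However, your last step has a genuine gap. You claim that $\dot x=y/r_x$ is ``bounded above by a negative constant'' because the curve stays in ``the compact region $x\in[x_1,x_b]$.'' That region is compact in $x$ but not in $(t,x)$: if the trajectory never reached $x_b$, it would remain in a regular region and hence exist for all $t\to-\infty$, and from \eqref{eq:r-mb} one has $r_x\sim C\,|t|^{2/3}\to\infty$ along the curve, so $\dot x=y/r_x\to 0$ even though $y$ stays bounded away from zero. Thus no uniform negative bound on $\dot x$ exists, and finite-time attainment of $x_b$ does not follow from the argument as stated: a priori $x(t)$ could merely increase asymptotically to some $x_2\le x_b$. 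The repair is precisely the estimate you already invoked in your first step, which is the paper's \eqref{eq:ext3}: one still has $|\dot x|\ge c\,|t|^{-2/3}$ for large negative $t$, and since $\int^{-\infty}|t|^{-2/3}\,\mathrm dt$ diverges, $x(t)$ would exceed every bound as $t\to-\infty$, contradicting $x(t)\le x_b$; hence $x$ attains $x_b$ at some finite $\tilde t<\tilde t_1<t_0$. With that substitution your proof is complete and essentially identical to the paper's.
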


\begin{proof}
Before presenting the proof, we illustrate its core concept in Figure \ref{fig:V}: the (non--autonomous) vector field induced by \eqref{eq:ODE1a-mb}--\eqref{eq:ODE1b-mb} prevents the existence of curves with initial data $|y_0|<1$ that extend back to $x=0$.
\begin{figure}[h]
  \centering
  \includegraphics[width=0.8\textwidth]{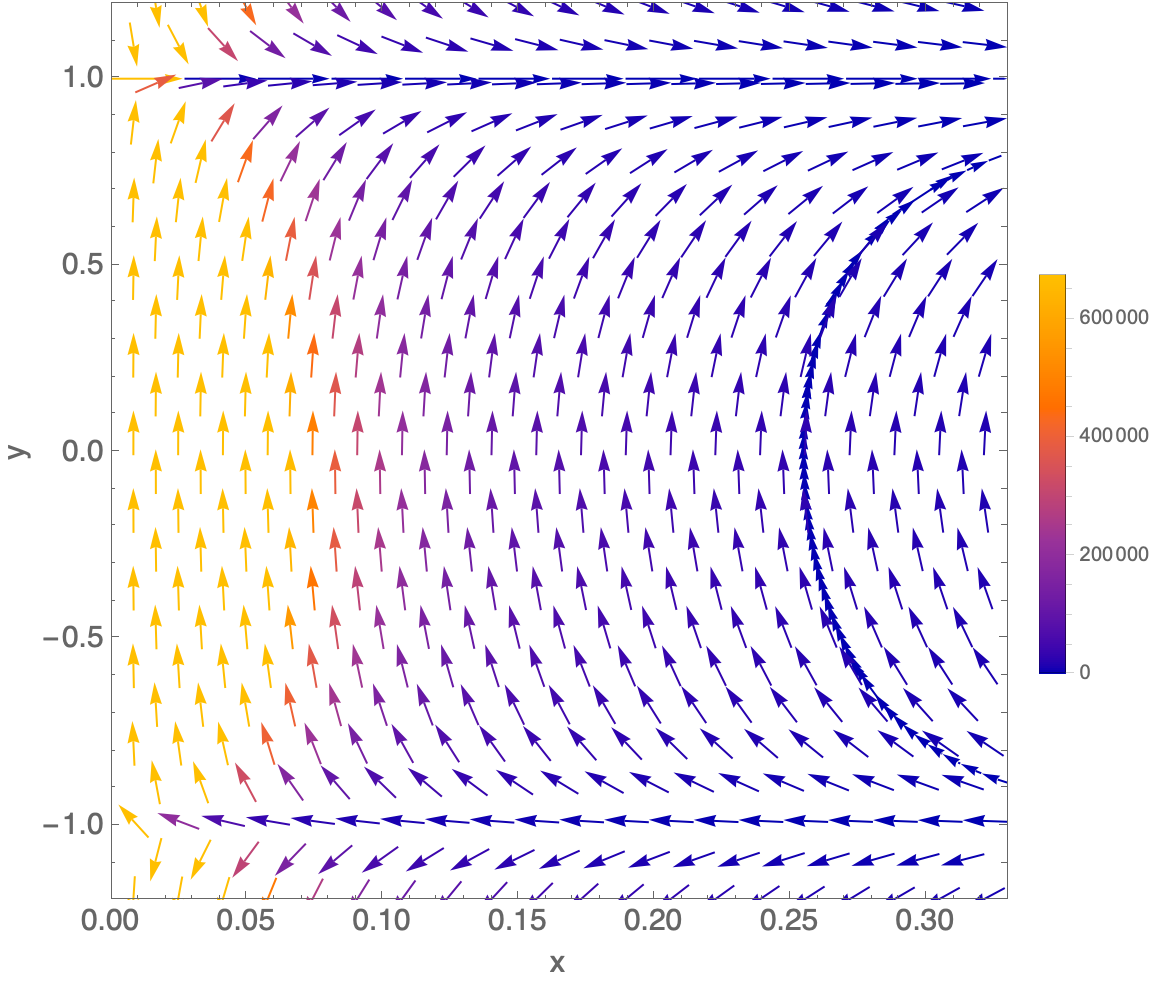}
  \caption{The behavior of the vector field associated to \eqref{eq:ODE1a-mb}--\eqref{eq:ODE1b-mb} is such that the only way to left--extend a solution from $x=x_b$ to $x=0$ is when $y_0=1$. Notice that the field is non autonomous (here is represented the situation at $t=1.01$ only). Other parameters are $n=2,\,a=1.5$.}
  \label{fig:V}
\end{figure}

After the above intuitive idea, let us start the proof by showing, as an intermediate step, that $\exists t_1<t_0\,:\,y(t_1)=0$. By contradiction, suppose that $y(t)>0,\forall t<0$. Then $x(t)$ is increasing $\forall t<t_0$ and then, called $t_{\text{inf}}\in\mathbb R\cup\{-\infty\}$ the infimum of the maximal interval where the solution can be extended, there exists $\lim_{t\to t_{\text{inf}}^+}x(t)=x_0\in[0,x_b[$. 

Suppose $t_{\text{inf}}=-\infty$. Using \eqref{eq:r-mb} and \eqref{eq:mu} in \eqref{eq:ODE1a-mb}--\eqref{eq:ODE1b-mb} we get 
\begin{equation}\label{eq:ext1}
\frac{\dot y(t)}{1-y(t)^2}\sim \frac{1}{x(1-a x^n)^{2/3}}\cdot\frac1{t^{2/3}},\quad\text{\ as\ }
t\to-\infty,
\end{equation}
which implies the existence of positive constants $C_1$, $C_2$ such that
$$
\frac{1-y(t)}{1+y(t)}\ge e^{-C_1 t^{1/3}+C_2}.
$$
Since the right-hand side positively diverges as $t\to-\infty$, we get a contradiction, recalling that $y(t)$ should remain positive. Then $t_{\text{inf}}\in\mathbb R$, and  this situation may occur only when there is a loss of regularity in the system, that is, if $\lim_{t\to t_{\text{inf}^+}}x(t)=0$. But in this case, again \eqref{eq:ODE1a-mb}--\eqref{eq:ODE1b-mb} gives
\begin{equation}\label{eq:ext2}
\frac{\mathrm dy}{\mathrm dx}=\frac{\dot y}{\dot x}\sim \frac{1-y^2}{x\,y},\quad\text{\ as\ }x\to 0
\end{equation}
from which one has the existence of a positive constant $C_3$ such that
$$
1-y(x)^2\sim\frac{C_3}{x^2}.
$$
Again, the right-hand side above positively diverges, which is excluded here. 

Then so far we have shown that $y(t_1)=0$ for some $t_1<t_0$. Inspection of the equivalent equation \eqref{eq:ps-dust} shows that, if $\dot x=0$, then $\ddot x>0$, so $x(t_1)$ is a local minimum for $x(t)$. Hence, in a left neighborhood of $t_1$, $x(t)$ is (strictly) decreasing and therefore $y(t)$ is negative. This fact forces  $y(t)<0,\,\forall t<t_1$ and then $x(t)$ will be decreasing for earlier times than $t_1$. Let us show that $x(\bar t)=x_b$ for some $\bar t<t_1$. If that is not the case, then $(x(t),y(t))$ will remain in an area where the ODE is regular ($x(t)>x(t_1)$ and $y\in]-1,0]$). This means that the solution can be extended $\forall t<t_1$. But in this case the same estimate as \eqref{eq:ext1} gives $y(t)\to -1$, and using \eqref{eq:ODE1a-mb} we get the existence of a positive constant $C_4$ such that 
\begin{equation}\label{eq:ext3}
\dot x\sim-\frac{C_4}{t^{2/3}},
\end{equation}
that integrated as $t\to -\infty$ gives $x(t)$ positively diverging, which is a contradiction. This concludes the proof of the lemma.
\end{proof}

The above Lemma shows  extensions of photon surfaces that  are not physically acceptable. The reason is that these extensions have two points of contact with the star boundary -- one at $t_0$ and the other at an earlier comoving time $\bar t<t_0$. Since we are considering a collapsing model, i.e. $r_t<0$, we have $r(\bar t,x_b)>r(t_0,x_b)=3M$. Therefore, the extension at $\bar t$ cannot be matched with the exterior photon space, which is indeed matched at $t=t_0$. We conclude that the null extension is the only possible one; the findings of this section can then be summarized as follows.

\begin{theorem}\label{thm:ext}    The marginally bound spherical dust collapse admits an extension of the photon surface $r=3M$ in the interior of the cloud. This is the only physically acceptable extension, and it is given by the surface
    $$
    \{(t,x(t),\theta,\phi)\,:\,t<t_0\},
    $$
    with the property that, fixed $\theta$ and $\phi$, the corresponding curve is a null radial geodesic, i.e. the couple $(x(t),1)$ solves system \eqref{eq:ODE1a-mb}--\eqref{eq:ODE1b-mb} with $k(x)\equiv 0$, given initial data $x(t_0)=x_b$ and $y(t_0)=1$. 
\end{theorem}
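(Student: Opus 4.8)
The plan is to reduce the entire statement to determining which values of the single parameter $y_0 := y(t_0)$ produce a physically acceptable interior extension. The matching across $\Sigma=\{x=x_b\}$ already forces the first initial condition: the interior surface must join the exterior photon surface $r=3M$ at the boundary, so $x(t_0)=x_b$ with $t_0$ fixed by \eqref{eq:t0}. The extension is then the backward-in-time solution of the non-autonomous system \eqref{eq:ODE1a-mb}--\eqref{eq:ODE1b-mb} (with $k\equiv0$) starting from $(x_b,y_0)$, and the whole theorem reduces to showing that $y_0=1$ is the only admissible choice.

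First I would settle existence and the geometric nature of the $y_0=1$ solution. By Remark \ref{rem:system}, the value $y(t_0)=1$ propagates to $y(t)\equiv1$, so the curve $(t,x(t))$ is an outgoing radial null geodesic in the quotient $M/S^2$, and by Remark \ref{rem:null} the hypersurface it generates is automatically a photon surface; this already yields the existence claim together with the characterization. I would then confirm that it truly enters the cloud: with $k\equiv0$ and $y\equiv1$, equation \eqref{eq:ODE1a-mb} gives $\dot x=1/r_x>0$, so as $t$ decreases from $t_0$ the radial coordinate decreases from $x_b$ and the surface $\{(t,x(t),\theta,\phi):t<t_0\}$ lies in the interior.

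For uniqueness I would dispose of the remaining values of $y_0$ one by one. The nowhere-spacelike requirement of Definition \ref{def:Ellis} immediately rules out $|y_0|>1$: from \eqref{eq:ODE1a-mb} and \eqref{eq:LTB} the generator $\partial_t+\dot x\,\partial_x$ has squared norm $y^2-1$, which is positive -- rendering $S$ spacelike -- exactly when $|y_0|>1$. The value $y_0=-1$ is the ingoing radial null geodesic, confined below the outgoing one and hence subsumed by the $y_0=1$ extension, while $y_0\in(-1,0)$ yields $\dot x<0$ and a strictly decreasing, less significant curve of the same ingoing type; the borderline $y_0=0$ is excluded because \eqref{eq:ps-dust} forces $\ddot x>0$ when $\dot x=0$, making $t_0$ a local minimum of $x(t)$, so the backward extension leaves the cloud ($x>x_b$) rather than entering it. The decisive interval is $y_0\in(0,1)$, and here I would invoke Lemma \ref{lem:ext}: such a solution necessarily returns to the boundary at some $\tilde t<t_0$. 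Since the model is collapsing, $r_t<0$, so $r(\tilde t,x_b)>r(t_0,x_b)=3M$, meaning that at this second contact the surface meets $\Sigma$ at an areal radius strictly larger than $3M$ and cannot be glued to the exterior photon surface; this extension is therefore inadmissible, leaving $y_0=1$ as the unique survivor.

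The main obstacle is plainly the interval $y_0\in(0,1)$, whose exclusion rests entirely on Lemma \ref{lem:ext}; the rest is a short causal-character computation or a significance comparison. The substance of that lemma lies in the asymptotic control of the non-autonomous field \eqref{eq:ODE1a-mb}--\eqref{eq:ODE1b-mb} as $t\to-\infty$ and as $x\to0^+$, which shows that a solution with $0<y_0<1$ can neither be prolonged to all past times with $y$ bounded away from $\pm1$ nor reach the regular center, and so must turn around and strike $x=x_b$ a second time. Granting the lemma, the proof of the theorem is essentially the bookkeeping of translating ``physically acceptable'' into the two constraints ``nowhere spacelike'' and ``matchable to $r=3M$ at $\Sigma$'', and then reading off that only $y_0=1$ survives.
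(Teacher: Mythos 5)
Your proposal is correct and follows essentially the same route as the paper: matching at $\Sigma$ fixes $x(t_0)=x_b$ via \eqref{eq:t0}, the causal-character computation excludes $|y_0|>1$, Remarks \ref{rem:null} and \ref{rem:system} give the $y_0=1$ null-geodesic extension, the $y_0\le 0$ cases are dismissed as in the paper (your $y_0=0$ local-minimum argument via \eqref{eq:ps-dust} is exactly the paper's Remark \ref{rem:Cao}), and the decisive interval $y_0\in\left]0,1\right[$ is eliminated by Lemma \ref{lem:ext} combined with the observation that $r_t<0$ forces the second boundary contact to occur at areal radius greater than $3M$, precluding the gluing to the exterior photon sphere.
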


\begin{remark}\label{rem:Cao}
The initial condition $x(t_0)=x_b$ ensures that $m$ is $C^0$ along the boundary of the star. Naturally, the Misner-Sharp mass remains constant on the exterior solution, and consequently, on the exterior photon surface at $r=3M$. Therefore, demanding that $m$ possesses $C^1$ regularity at the star's boundary effectively means requiring 
$$\frac{\mathrm d}{\mathrm dt}m(x(t_0))=0,$$
which is claimed in \cite{Cao:2019vlu} to be assumed \footnote{Observe that in \cite{Cao:2019vlu} the Misner-Sharp mass is denoted by $E$.} as a second initial condition, in addition to $x(t_0)=x_b$.  
But actually, that would give $\dot x(t_0)=0$ since the mass depends only on $x$; moreover from the second order ODE \eqref{eq:ps-dust} we have that when $\dot x(t)=0$ then $\ddot x(t)=1/(r\,r_x)>0$ and so a critical point of the curve $x(t)$ is necessarily a strict local minimum. That excludes this possibility because, if we search for an extension of the photon surface in the interior portion of the model, we would rather look for a function $x(t)<x_b$, and this cannot happen in this case where,  in a neighborhood of the initial time $t_0$, $x(t)$ would attain strictly greater values than $x(t_0)=x_b$.

The apparent contradiction is indeed resolved because,
as a matter of fact, the condition used in  \cite{Cao:2019vlu} is precisely the one derived from our Theorem \ref{thm:ext}, as one can verify from \cite[(5.41)]{Cao:2019vlu},  although it is misinterpreted there, as explained above.
\end{remark}

\subsection{Photon surface analysis}\label{sec:ps-ns}
Once we have established that the photon surface uniquely extends to the interior spacetime as a null hypersurface, let us study its qualitative behavior to understand whether it covers the singularity developing in the LTB model. 

For this analysis, it will be more useful to denote the photon surface as a solution of the outgoing null radial geodesic expressing $t$ as a function of $x$, i.e.
\begin{equation}\label{eq:nullradial}
\frac{\mathrm dt(x)}{\mathrm dx}= {r_x(t(x),x)},\qquad t(x_b)=t_0,  
\end{equation}
and defined in a left neighborhood of $x_b$, possibly extendable up to the regular centre $x=0$. This is possible because $r_x>0$ and therefore $x(t)$ can be inverted. Consequently, we can exploit all the techniques known in the literature and based on comparison theorems in ODE \cite{Mena:2001dr,Giambo:2002tp}. These techniques are based on a remarkable property \cite{Giambo:2001wi} that the apparent horizon curve $t_h(s)$ is a subsolution of the null geodesic equation $t'(x)=r_x(t(x),x)$, i.e.
$$
t_h(x)\le r_x(t_h(x),x).
$$

Recalling that $t_h(x_b)>t_0$, we have that  $t(x)<t_h(x)$, $\forall x<x_b$. Please observe that there still might be the case that $t(0)=t_h(0)=1$, because on the singular center of symmetry $(r=0,t=1)$ the regularity needed to apply comparison theorems in ODE is lost. 

Then, the photon surface will  cover the singularity \textit{completely} (i.e. including the center) if one of the two following instances occurs:
\begin{enumerate}
    \item $\lim_{x\to 0^+}t(x)<t_s(0)=1$, or
    \item $\lim_{x\to 0^+}t(x)=t_s(0)=1$ but there are no null radial outgoing geodesics $t_g(x)$ such that $t_g(0)=1$ and $t_g(x)<t(x)$, $\forall x\in]0,x_b]$.  
\end{enumerate}

We will observe that the situation varies depending on whether the central singularity is naked or not. The initial scenario arises when, in \eqref{eq:mu}, we have $n=1,2$, or $n=3$ and $a>{2(26+15\sqrt 3)}/{27}$. These cases specifically align with those where we can identify a super-solution for \eqref{eq:nullradial} of the form $t_\xi=1+\xi x^n$, selecting $\xi$ such that $t_\xi(x)<t_h(x)$ in a right neighborhood of $x=0$. The star boundary $x_b$ will be chosen in this neighborhood.

Assume that it is possible to select $t_0$ as defined in \eqref{eq:t0} such that $t_\xi(x_b) < t_0 < t_h(x_b)$. In this case, the continuation of the solution to the Cauchy problem \eqref{eq:nullradial} extends to smaller values of $x$, reaching $x = 0$, which results in $t(x) \to 1$. Consequently, the extension of the photon surface originates from the central singularity. Moreover, consider a time $t^*\in]t_\xi(x_b),t_0[$, the radial null geodesic with initial condition $t(x_b)=t^*$ will be extended backward up to $x=0$, again with $t(x)\to 1$, giving rise to a photon (actually, an infinite number of photons, given the arbitrariness of $t^*$) emanating from the central singularity but escaping the photon surface. 

In view of the above, and recalling that 
$$
r(t_h(x_b),x_b)=2m(x_b)<3m(x_b)=(t_0,x_b),
$$
we only have to check that $t_\xi(x_b)<t_0$. Using \eqref{eq:t0} we obtain that this condition is satisfied in a right neighborhood of $x=0$ if and only if $n=1,2$ or $n=3$ and $a>(2/3)^{3/2}$. Therefore, the mass profiles ensuring a central naked singularity also imply that the photon surface extends up to the central singularity, but it is insufficient to cover it. 

The situation described here is represented in Figures \ref{fig:2} and \ref{fig:3}, where the parameters are set to produce a naked central singularity. 
\begin{figure}[h]
  \centering
  \includegraphics[width=0.9\linewidth]{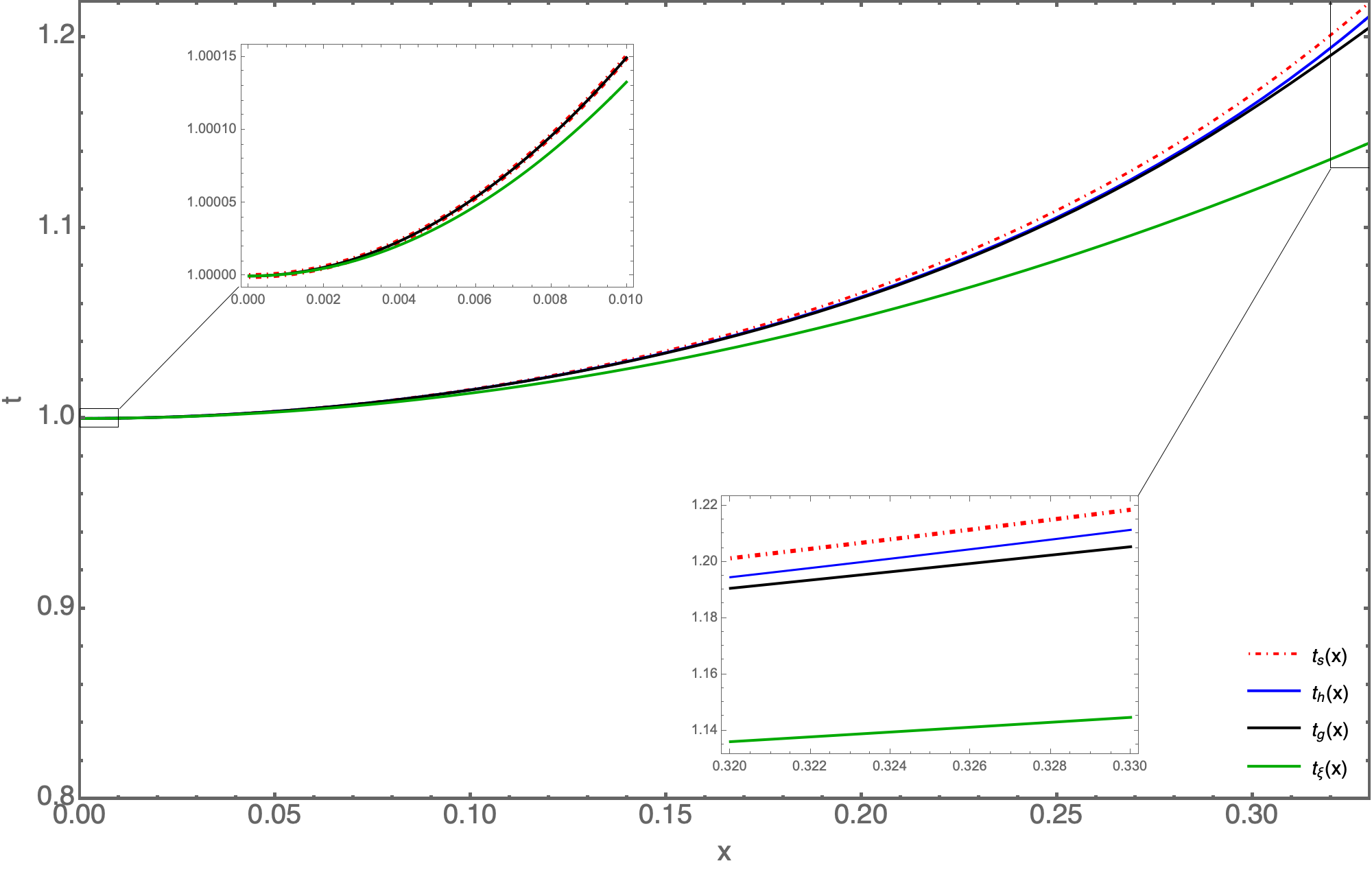}
  \caption{The extension of the photon surface in case when the central singularity is naked. The parameters of \eqref{eq:mu} are set as follows: $n=2$, $a=1.5$, with the star boundary $x_b=0.33$. The curve $t_\xi(x)$, with $\xi=1.35$, is a supersolution of the ODE \eqref{eq:nullradial} in $[0,x_b]$. The photon surface is extended back to the central singularity.}
  \label{fig:2}
\end{figure}
\begin{figure}[h]
  \centering
  \includegraphics[width=0.9\textwidth]{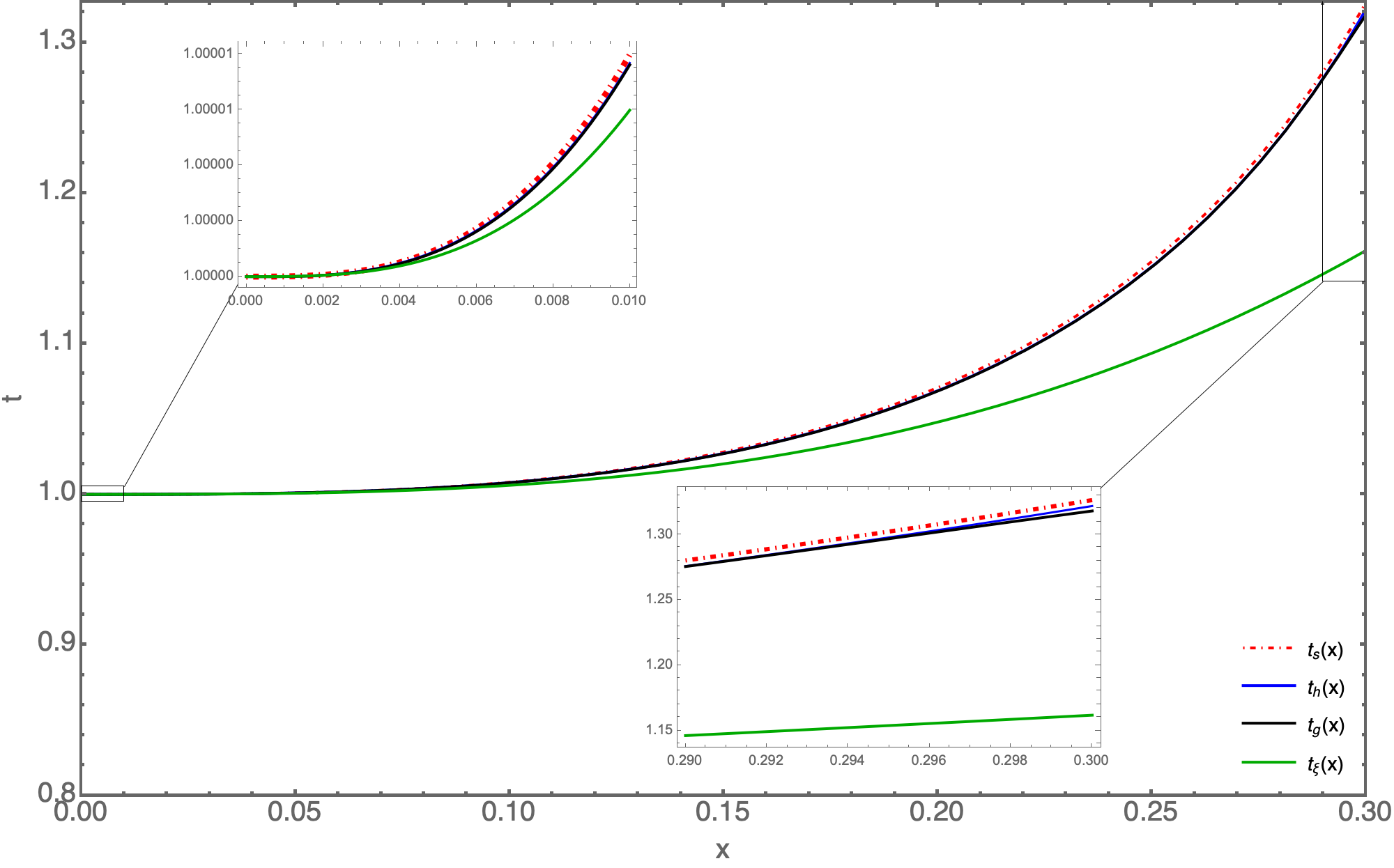}
  \caption{Another extension of the photon sphere with a central naked singularity. Here $n=3$, $a=8$ and $x_b=0.3$. The parameter of the supersolution here is $\xi=6$.}
  \label{fig:3}
\end{figure}

The curve $t_\xi(x)$ is a supersolution of \eqref{eq:nullradial} and $x_b$ is chosen such that $t_0\in]t_\xi(x_b),t_h(x_b)[$. The solution of \eqref{eq:nullradial} can be extended back for $x<x_b$: it cannot cross  the subsolution $t_h(x)$ from below nor the supersolution $t_\xi(x)$ from above and then it must extend up to the central singularity $(x=0,t=1)$.

Conversely, when $n>3$ or $n=3$ with $a$ sufficiently small, resulting in a covered central singularity, the photon surface will extend back to the regular center, see Figure \ref{fig:4}. 
\begin{figure}[h]
  \centering
  \includegraphics[width=0.9\textwidth]{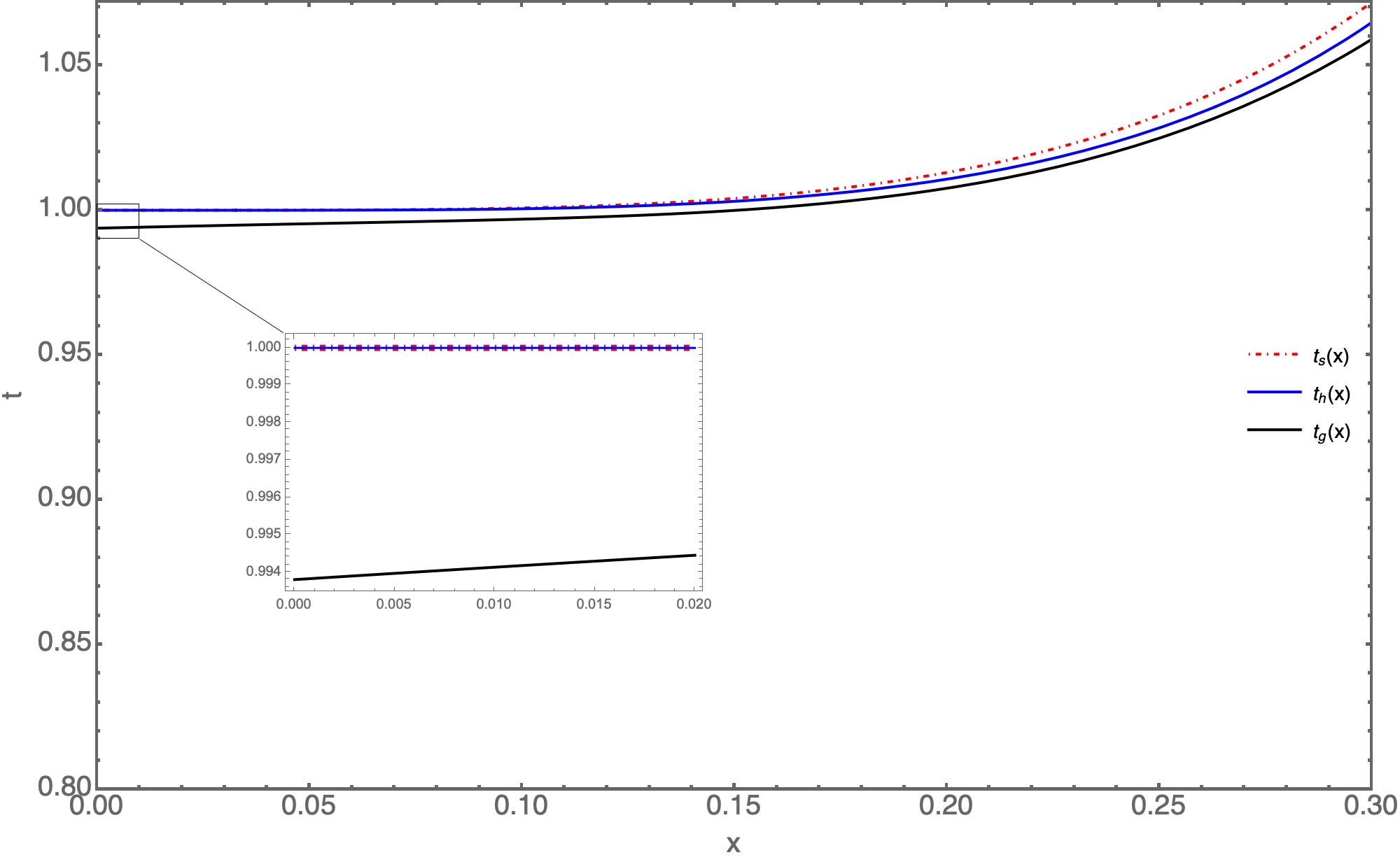}
  \caption{The extension of the photon sphere, in case the central singularity is not naked, is necessarily  continued back to the \textit{regular} centre $t<1$. Here $n=4$, $a=8$, $x_b=0.3$.}
  \label{fig:4}
\end{figure}
If it were to reach the central singularity, it would represent a null radial geodesic escaping the central singularity, violating the condition that the central singularity itself is trapped in this context.

We can summarize the above discussion as follows. 

\begin{theorem}\label{thm:end}
In the marginally bound spherical dust collapse,  the photon surface extends back to the central singularity if and only if it is naked. In this case, there exist null radial geodesics $t_g(x)$ emanating from the central naked singularity such that $t_g(x)<t(x)$, $\forall x\in ]0,x_b]$, where $t(x)$ is the solution of \eqref{eq:nullradial} associated with the photon surface extension established in Theorem \ref{thm:ext}.

\noindent Otherwise, if the central singularity is covered, the photon surface will extend to the regular centre.
\end{theorem}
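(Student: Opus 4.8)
The plan is to translate the statement about the photon surface into a statement about the solution $t(x)$ of the null radial geodesic equation \eqref{eq:nullradial}, which by Theorem \ref{thm:ext} is exactly the photon surface extension, and then to control $t(x)$ near the central shell $x=0$ by ODE comparison. The two comparison barriers are the apparent horizon curve $t_h(x)$, which by \cite{Giambo:2001wi} is a subsolution of \eqref{eq:nullradial} (so that $t'=r_x$ overtakes $t_h'$ along $t_h$), and a power-law supersolution $t_\xi(x)=1+\xi x^n$ adapted to the leading profile \eqref{eq:mu} of $\mu$. Two standing facts drive everything: since $r_x>0$, every solution of \eqref{eq:nullradial} is monotone increasing in $x$; and two distinct solutions never cross. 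The already-established inequality $t(x)<t_h(x)$ for all $x<x_b$ will be used throughout.

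First I would treat the naked regime ($n=1,2$, or $n=3$ with $a$ above the threshold $2(26+15\sqrt3)/27$). The key is to produce $\xi>0$ for which $t_\xi$ is a (strict) supersolution of \eqref{eq:nullradial} and satisfies $t_\xi(x)<t_h(x)$ on a right neighbourhood of $x=0$, and then to restrict $x_b$ to this neighbourhood. The datum $t_0$ of \eqref{eq:t0} automatically obeys $t_0<t_h(x_b)$, since $r(t_0,x_b)=3M>2M=r(t_h(x_b),x_b)$ and $r_t<0$; the complementary inequality $t_\xi(x_b)<t_0$ is precisely the elementary computation with \eqref{eq:t0} that singles out $n=1,2$ or $n=3,\ a>(2/3)^{3/2}$, a condition implied by nakedness. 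Hence $t_\xi(x_b)<t_0<t_h(x_b)$, and the solution $t(x)$, started strictly between the two barriers, is trapped between them for all $x<x_b$: it crosses neither $t_h$ from below nor $t_\xi$ from above. Squeezing $t_\xi(x)<t(x)<t_h(x)$ and letting $x\to0^+$, where both barriers tend to $1$, forces $t(x)\to1$, so the extension reaches the singular centre. For the escaping geodesics I would pick any $t^\ast\in\,]t_\xi(x_b),t_0[$ and repeat the trapping argument on the solution $t_g$ of \eqref{eq:nullradial} with $t_g(x_b)=t^\ast$: it too limits to $1$ at $x=0$, and because solutions never cross, the inequality $t_g(x_b)<t_0$ propagates to $t_g(x)<t(x)$ on all of $]0,x_b]$.

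For the covered regime ($n>3$, or $n=3$ with small $a$) I would argue by contradiction using the qualitative characterization of \cite{Mena:2001dr,Giambo:2002tp}: a covered central singularity is exactly one from which no outgoing radial null geodesic emanates. The extension $t(x)$ is such a geodesic, is increasing in $x$, and stays below $t_h(x)<1$; hence as $x\to0^+$ it decreases to a finite limit $t(0)\le1$. Were $t(0)=1$, the curve $t(x)$ would be an outgoing radial null geodesic issuing from $(x=0,t=1)$, contradicting coveredness; therefore $t(0)<1$ and the photon surface terminates at the regular centre. Combining the two regimes yields the stated equivalence.

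The main obstacle is the construction and verification of the supersolution near the singular vertex $(x=0,t=1)$, where $r(t,x)=x(1-\mu(x)t)^{2/3}$ degenerates because $1-\mu t\to0$, so $r_x$ must be expanded asymptotically along $t=t_\xi$. Writing $1-\mu(x)t_\xi=(a-\xi)x^n+o(x^n)$ and differentiating, one finds $r_x\sim C(\xi,a)\,x^{2n/3}$, to be compared with $t_\xi'=\xi n\,x^{n-1}$. The exponents show that $t_\xi'$ dominates for $n<3$ (the supersolution property is then automatic), while for $n>3$ the term $r_x$ dominates and no such power-law supersolution exists — consistent with the covered regime. The borderline case $n=3$ is the delicate one: both sides scale like $x^2$, and the supersolution inequality collapses to an algebraic constraint on $\xi$ and $a$ whose solvability with $0<\xi<a$ must be shown to reproduce exactly the nakedness threshold $2(26+15\sqrt3)/27$ of \cite{Mena:2001dr,Giambo:2002tp}, and to imply the weaker bound $a>(2/3)^{3/2}$ coming from $t_\xi(x_b)<t_0$. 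Establishing that these thresholds coincide is the one genuinely computational point on which the whole comparison framework rests.
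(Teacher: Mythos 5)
Your proposal is correct and follows essentially the same route as the paper: reduction to the ODE \eqref{eq:nullradial}, trapping the solution between the subsolution $t_h$ and the supersolution $t_\xi=1+\xi x^n$ with the same $t^*$-argument for escaping geodesics in the naked case, and the same contradiction-via-trappedness argument in the covered case, the only difference being that the paper outsources the existence of $t_\xi$ (and the $n=3$ threshold $a>2(26+15\sqrt 3)/27$) to \cite{Mena:2001dr,Giambo:2002tp} rather than re-deriving the asymptotics as you sketch. One harmless slip: your claim $t_h(x)<1$ near $x=0$ can fail (e.g.\ $n=3$ with $8/27<a\le 2(26+15\sqrt 3)/27$, where the singularity is covered but $t_h(x)>1$), yet the conclusion $t(0)\le 1$ you need already follows from $t(x)<t_h(x)\to 1$.
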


\section{Discussion and conclusions}\label{sec:outro}
The study of photon surfaces has generated increasing attention in recent years, particularly due to their fundamental role in the causal structure and observational features of spacetimes. 
In this work, we investigated the behavior and possible extensions of photon surfaces within a dynamically evolving spherically symmetric spacetime, particularly in the context of the marginally bound Lemaitre-Tolman-Bondi (LTB) dust collapse model. Building on the general framework for photon surfaces established in earlier literature, we derived the dynamical equations governing their evolution and showed that, under suitable initial conditions, these surfaces can be uniquely continued into the interior of the collapsing matter as null hypersurfaces. This analysis not only confirms but also refines previous claims, clarifying that the condition used in earlier works—interpreted there as a requirement for the continuity of the mass profile—is more accurately understood in terms of the geometric nature of null hypersurfaces. 

Our results indicate that the only physically consistent extension of the photon surface $r = 3M$ into the collapsing cloud is given by a surface generated by outgoing null radial geodesics. This extension satisfies the photon surface equation and adheres to the junction conditions at the boundary with the Schwarzschild exterior. That is a static geometry, and this is precisely why the photon sphere $r=3M$ is timelike: the light cones at fixed r do not change with time, so circular null orbits can remain tangent to the timelike worldtube $r=3M$. Inside the LTB cloud, the metric coefficients depend on comoving time, and the light cones tilt progressively inward as the dust collapses. A timelike hypersurface playing the role of a photon surface would need to track these changing light cones without letting tangent photons shear away from it, which becomes impossible once the infall is strong enough (see Lemma \ref{lem:ext}). The photon surface, therefore, has to become null: in the time-dependent collapse here considered, the only way to preserve the trapping property is to let the surface coincide with a congruence of radial null geodesics.

Importantly, the nature of the central singularity—whether it is covered or naked—plays a crucial role in determining the global structure of the photon surface. When the singularity is covered, the photon surface terminates at the regular center before the singularity forms. Conversely, if the singularity is naked, the photon surface can extend all the way back to the central singularity itself. However, in such cases, the surface is not sufficient to prevent the escape of photons from the singularity, highlighting its limited covering capability. Previous studies have established that naked singularities create different lensing effects compared to black holes \cite{Virbhadra:2022iiy}. Additionally, research by Ortiz et al. \cite{Ortiz:2015rma} demonstrates differences in the shadow evolution between naked singularities and black holes.

While both objects eventually appear similar to distant observers, their formation dynamics differ significantly, also due to the infinitely redshifted photons that escape from the central naked singularity to the external region -- a phenomenon documented in earlier work \cite{Dwivedi:1998ts,Giambo:2006vr}.  
Our findings in the present paper support these observations by showing that shadow evolution also varies in terms of photon surface extension, suggesting that understanding photon surface geometry is crucial for distinguishing these compact objects, as the formation and features of the shadow are determined  by the whole underlying spacetime structure.

From an observational standpoint, the key implication of Theorem \ref{thm:end} is that the early-time evolution of the shadow differs between the two end-states. When the central singularity is naked, the photon surface extends to the singularity itself but fails to prevent the escape of null geodesics (those satisfying $t_g(x) < t(x)$); the shadow, therefore, forms with a delay and grows more slowly than in the covered case, where the photon surface reaches the regular centre and the shadow develops monotonically. In principle, a time-resolved sequence of shadow images—or, equivalently, the spectral evolution of photons traversing the collapsing cloud—could capture this distinction. However, it must be remembered that the collapse timescale $\sim GM/c^3$, as one can infer from standard dimensional analysis -- this fact and the opacity of realistic matter severely limit current observational prospects, making such tests a target for next-generation instruments rather than present facilities.

These findings, relying on purely analytical studies, refine previous contributions -- in particular \cite{Cao:2019vlu} -- and contribute to the ongoing discussion about the role of photon surfaces in dynamical spacetimes and their potential connection to cosmic censorship, showing that photon surfaces, though geometric in nature, are deeply intertwined with the causal and global structure of the underlying spacetime. Future research may explore whether similar conclusions hold in more general collapse scenarios, including non-marginally bound cases or models with non-zero pressure, as well as in the presence of rotation or other deviations from spherical symmetry.

\subparagraph*{Acknowledgement.}{
The authors acknowledge the support of INdAM. In particular, RG is partially supported by "INdAM--GNAMPA Project" CUP E53C25002010001.
The authors thank  LM Cao and Y Song for their insightful discussions, and the anonymous Referees for their helpful remarks and suggestions.
}


\bibliography{source}
\end{document}